\newcommand{\wye}{\mathbin{\tikz[x=1ex,y=1ex]{\draw[line width=.1ex] (0,0)--(45:1)--++(-45:1) (45:1)--++(0,1);}}}
\newcommand \bef{\mathbf{f}}
\newcommand \bi{\mathbf{i}}
\newcommand \br{\mathbf{r}}
\newcommand \bv{\mathbf{v}}
\newcommand \bx{\mathbf{x}}
\newcommand \bA{\mathbf{A}}
\newcommand \bB{\mathbf{B}}
\newcommand \bone{\mathbf{1}}
\newcommand \bD{\mathbf{D}}
\newcommand \bP{\mathbf{P}}
\newcommand \bL{\mathbf{L}}
\newcommand \bV{\mathbf{V}}
\newcommand \bY{\mathbf{Y}}
\newcommand \bR{\mathbf{R}}
\newcommand \bW{\mathbf{W}}
\newcommand \mcG{\mathcal{G}}
\newcommand \mcN{\mathcal{N}}
\newcommand \mcE{\mathcal{E}}
\newcommand \bzero{\boldsymbol{0}}
\newcommand \whf{\widehat{\mathbf{f}}}
\newcommand \whL{\widehat{\mathbf{L}}}
\newcommand \whR{\widehat{\mathbf{R}}}
\newcommand \whB{\widehat{\mathbf{B}}}
\newcommand \wtB{\widetilde{\mathbf{B}}}
\newcommand \wtP{\widetilde{\mathbf{P}}}
\DeclareMathOperator{\range}{range}
\DeclareMathOperator{\nullspace}{null}
\DeclareMathOperator{\diag}{diag}
\DeclareMathOperator{\dimension}{dim}
\newtheorem{proposition}{Proposition}
\newtheorem{lemma}{Lemma}
\newtheorem{theorem}{Theorem}
\newtheorem{remark}{Remark}
\newtheorem{assumption}{Assumption}
\newcommand\groupequation[2][17pt]{%
  \setbox0=\hbox{$\displaystyle#2$}%
  \stackengine{0pt}{\copy0}{%
    \makebox[\linewidth]{\hfill$\left.\rule{0pt}{\ht0}\right\}$\kern#1}}
    {O}{c}{F}{T}{L}
}
\title{Time-domain Generalization of Kron Reduction}
\author{Manish K. Singh, Sairaj Dhople, Florian D\"orfler, and Georgios B. Giannakis
\thanks{
M.~K. Singh, S.~Dhople, and G.~B.~Giannakis are with the Department of Electrical \& Computer Engineering, University of Minnesota, Minneapolis, MN USA (e-mail:
\{msingh,~sdhople,~georgios\}@umn.edu).
F.~D\"orfler is with the Department of Information Technology and Electrical Engineering, ETH Z\"urich, Switzerland (e-mail: doerfler@control.ee.ethz.ch). 
}
}
\begin{document}
\maketitle
\thispagestyle{empty}
\allowdisplaybreaks
\begin{abstract}
Kron reduction is a network-reduction method that eliminates nodes with zero current injections from electrical networks operating in sinusoidal steady state. In the time domain, the state-of-the-art application of Kron reduction has been in networks with transmission lines that have constant $R/L$ ratios. This paper considers $RL$ networks without such restriction and puts forth a provably exact time-domain generalization of Kron reduction. Exemplifying empirical tests on a $\wye-\Delta$ network are provided to validate the analytical results. 
\end{abstract}

\section{Introduction}
Complex electrical networks are encountered in several engineering domains from integrated circuits to power grids. Oftentimes, a subset of nodes in such networks feature no actuation or sensing; henceforth referred to as \emph{interior} nodes. To facilitate analysis and computation, it is desirable to eliminate interior nodes and obtain reduced network models that exclusively retain the extant \emph{boundary} nodes. The workhorse enabling reduction of electrical networks derives from the classical \emph{Kron reduction}~\cite{Kron}. A familiar example of this is the elemental wye-delta ($\wye$-$\Delta$) transform. (See Fig.~\ref{fig:Overview}.) The method is well defined when all excitations are in sinusoidal steady state and all network interconnections are modeled as impedances at a fixed frequency. Kron reduction in such a setting boils down to computing a Schur complement of the admittance matrix (that establishes the algebraic map between nodal voltages and current injections). The effort~\cite{Florian-2013-Kron} provides a comprehensive survey of Kron reduction and establishes connections to a wide range of graph- and system-theoretic constructs; similarly~\cite{Sairaj14synchronization,Monshizadeh-2018} highlight some recent extensions for diverse applications. 

Interestingly, Kron reduction is not as widely studied in the time domain with arbitrary excitation, wherein the governing dynamics are differential-algebraic equations (DAEs). Exact model-reduction results in these settings are restricted to homogeneous networks, which assume lines to have constant $R/L$ ratios, and include purely resistive and inductive networks as special cases~\cite{Caliskan12CDC, Caliskan14Automatica}. Attempts to address generalized settings yield limited accuracy guarantees~\cite{Floriduz19Kron}.

Pursuing a time-domain generalization of Kron reduction, this work considers an $RL$ network without the restrictive constant $R/L$ constraint. For the considered setting, this work contributes a projection-based generalized time-domain reduced model with two prominent advantages: i)~the reduction is exact, implying equivalence to the full-order model; and ii)~the time-domain analysis permits inclusion of arbitrary initial conditions (as opposed to frequency-domain approaches). Phasor-domain Kron reduction and time-domain Kron reduction with constant $R/L$ ratios are recovered as special cases.  
\begin{figure}[t!]
\centering 
\includegraphics[scale=0.8]{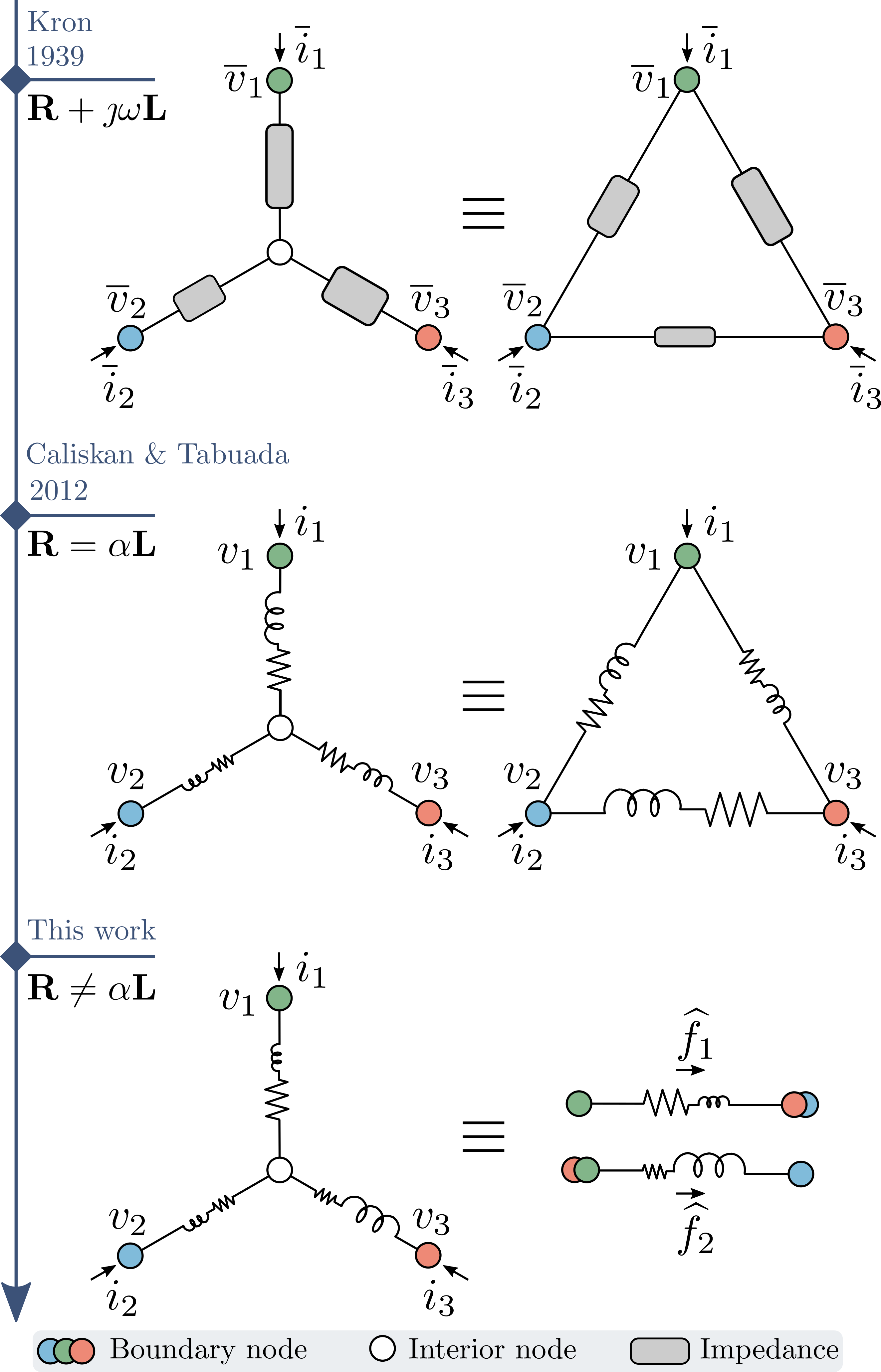}
\caption{Prominent existing results and proposed advancement towards electrical network reduction. State-of-the art for phasor domain is the classical Kron reduction (top); that in the time domain is restricted to networks with constant $R/L$ ratios (middle). We provide a generalization in the time domain (bottom) that recovers prior results as special cases.}\label{fig:Overview}
\vspace{-1em}
\end{figure}
\section{Preliminaries}\label{prelim}
\subsection{Phasor Representation} \label{sec:Notation}
In sinusoidal steady state at frequency $\omega$, we express time-domain signals as $x(t)=|x|\cos(\omega t+\theta_x)$, 
where $(|x|,\theta_x)$ are constants. For analytical ease, we represent $x(t)$ by a corresponding complex-valued rotating vector
\begin{equation*}
     \overrightarrow{x}(t)=x(t)+\jmath x(t-\sfrac{\pi}{2\omega})=|x|\mathrm{e}^{\jmath\theta_x}\mathrm{e}^{\jmath \omega t} = \overline{x} \mathrm{e}^{\jmath \omega t}.
\end{equation*}
The complex constant quantity $\overline{x} = |x|\mathrm{e}^{\jmath\theta_x}$ is referred to as a phasor. Dynamics satisfied by real-valued signals $x(t)$ in a linear and time-invariant system are also satisfied by $\overrightarrow{x}(t)$. This facilitates translating differential equations in $x(t)$ to algebraic equations in $\overline{x}$ and underscores the popularity of phasors in steady-state analysis of electrical networks. 

\subsection{Electrical-network Model}
Consider a single-phase $RL$ network described by a connected graph $\mcG=(\mcN, \mcE)$. The node set is indexed as $\mcN=\{1,\dots,N\}$, and let $E=|\mcE|$. Arbitrarily assigning directions, an edge $e$ from node $m$ to $n$ is denoted as $e=(m,n)$. Line resistances and inductances for an edge $e \in \mcE$ are denoted by $r_e,~\ell_e\geq0$. The topology of $\mcG$ is captured by the incidence matrix $\bB\in\{0,\pm 1\}^{N\times E}$ with entries $B_{k,e}=1(-1)$ if $k=m(n)$ when $\exists~e=(m,n)\in\mcE$; and $B_{k,e}=0$, otherwise.

Let the real-valued signals $v_n(t)$ and $i_n(t)$ denote the instantaneous voltage and current injection at node $n$; and $f_e(t)$ represent current flow on edge $e$. In what follows, the explicit time dependence of signals will be omitted for notational ease. For all lines $e=(m,n)$, the line currents and node voltages obey the first-order $RL$-dynamics 
\begin{equation} \label{eq:model-scalar}
    \ell_e\dot{f}_e+r_ef_e=v_m-v_n.
\end{equation}
The model in~\eqref{eq:model-scalar} requires $\ell_e\neq0$ to retain dynamics. Otherwise, the flows can be trivially expressed in terms of the node voltages. Thus, the following non-prohibitive assumption is made at the outset to facilitate exposition.
\begin{assumption}\label{as:1}
For all edges $e\in\mcE$, inductance $\ell_e>0$.
\end{assumption}

Define the vectors $\bv=\{v_n\}_{n\in\mcN}$, $\bi=\{i_n\}_{n\in\mcN}$, and $\bef=\{f_e\}_{e\in\mcE}$. Collectively, line dynamics in~\eqref{eq:model-scalar}, along with Kirchoff's current law (KCL) can be succinctly written using matrix-vector notation as
\begin{subequations}\label{eq:model}
	\begin{align}
	\bL\dot{\bef}&=-\bR\bef+\bB^\top\bv,\label{seq:model:RL}\\
	\bi&=\bB\bef\label{seq:model:KCL},
	\end{align}
\end{subequations}
where $\bR=\diag(\{r_e\}_{e\in\mcE})$ and $\bL=\diag(\{\ell_e\}_{e\in\mcE})$. This paper focuses on the dynamical system in~\eqref{eq:model}, identifying: input $\bv$, state $\bef$, and output $\bi$. (While this work considers a voltage-actuated network with currents serving as outputs, the developed approach can be extended to settings with current actuation and voltage outputs.) Since $\bL$ is invertible per Assumption~\ref{as:1}, the model~\eqref{eq:model} constitutes an ordinary differential equation (ODE) with a linear output equation. 

\subsection{Problem Statement}\label{sec:PS}
Suppose the network graph $\mcG$ has $N_0\geq1$ interior nodes collected in the set $\mcN_0\subset\mcN$. Without loss of generality, the network nodes can be numbered to feature the boundary nodes first, thereby enabling the partitioning $\bi=[\bi_1^\top~\bi_0^\top]^\top$ and $\bv=[\bv_1^\top~\bv_0^\top]^\top$. Corresponding to the interior nodes $\mcN_0$, we have $\bi_0=\bzero$. To explicitly impose zero-current injection for the nodes in $\mcN_0$, let us partition the incidence matrix as $\bB^\top=[\bB^\top_1~\bB^\top_0]$, where $\bB^\top_0$ has $N_0$ columns. The ensuing dynamical system is now governed by the DAE 
\begin{subequations}\label{eq:DAE}
\begin{align}
	\bL\dot{\bef}&=-\bR\bef+\bB^\top\bv,\label{eq:DAE-D}\\
	\bzero&=\bB_0\bef,\label{eq:DAE-A}
\end{align}
\end{subequations}
with the corresponding output equation
\begin{equation}
    \bi_1=\bB_1\bef\label{eq:DAEoutput}.
\end{equation}
In a nutshell, Kron reduction aspires to uncover the link between current injections $\bi_1$ and voltages $\bv_1$. Our effort is to do so by reducing the DAE \eqref{eq:DAE} to an ODE with inputs being exclusively the voltages $\bv_1$. Before we present this result, we overview prior efforts.

\section{Prior Model-reduction Results}\label{sec:prior}
\subsection{Kron Reduction in Steady State}
Consider a steady-state operating condition, wherein flows, injections, and nodal voltages are in sinusoidal steady state with frequency $\omega$. In line with the discussion in Section~\ref{prelim} for steady-state analysis, let us denote the complex-valued rotating vectors for current flow, injection, and nodal voltages by $\overrightarrow{\bef}, \overrightarrow{\bi},\overrightarrow{\bv}$, and recognize that they satisfy~\eqref{eq:model}. Substituting $\overrightarrow{\bef} = \overline{\bef} \mathrm{e}^{\jmath \omega t}$, $\overrightarrow{\bi} = \overline{\bi} \mathrm{e}^{\jmath \omega t}$, and $\overrightarrow{\bv} = \overline{\bv} \mathrm{e}^{\jmath \omega t}$ yields: 
\begin{subequations}\label{eq:temp}
	\begin{align}
	\jmath\omega\bL\overline{\bef}\mathrm{e}^{\jmath \omega t}&=-\bR\overline{\bef}\mathrm{e}^{\jmath \omega t}+\bB^\top\overline{\bv}\mathrm{e}^{\jmath \omega t},\label{seq:temp1}\\
	\overline{\bi}\mathrm{e}^{\jmath \omega t}&=\bB\overline{\bef}\mathrm{e}^{\jmath \omega t}\label{seq:temp2}.
	\end{align}
\end{subequations}
Solving for $\overline{\bef}$ from~\eqref{seq:temp1} and substituting the resultant in~\eqref{seq:temp2} yields the familiar algebraic network model
\begin{equation}\label{eq:iYv}
    \overline{\bi}=\bB(\bR+\jmath\omega \bL)^{-1}\bB^\top \overline{\bv} = \bY \overline{\bv},
\end{equation}
where $\bY=\bB(\bR+\jmath\omega \bL)^{-1}\bB^\top$ is the \emph{admittance matrix}. The inverse $(\bR+\jmath\omega \bL)^{-1}$ exists owing to the invertibility of $\bL$ per Assumption~\ref{as:1} and non-negativity of resistances $r_e\geq0,\,\forall e \in \mcE$. Suitably partitioning~\eqref{eq:iYv} provides
\begin{equation}\label{eq:Kronpartition}
    \begin{bmatrix}\overline{\bi}_1\\\mathbf{0}
     \end{bmatrix}=\begin{bmatrix}\bY_{11}&\bY_{10}\\\bY_{10}^\top&\bY_{00}\end{bmatrix}\begin{bmatrix}\overline{\bv}_1\\\overline{\bv}_0\end{bmatrix}.
\end{equation}
From the second row in~\eqref{eq:Kronpartition}, we can isolate 
$\overline{\bv}_0=-\bY_{00}^{-1}\bY_{10}^\top\overline{\bv}_1$, which, when substituted back in the first row, yields the reduced model
\begin{equation}\label{eq:Kron}
    \overline{\bi}_1=(\bY\setminus\bY_{00})\overline{\bv}_1=\bY_{r}\overline{\bv}_1,
\end{equation}
where $\bY_{r}=(\bY\setminus\bY_{00})=\bY_{11}-\bY_{10}\bY_{00}^{-1}\bY_{10}^\top$ is the Schur complement of $\bY_{00}$ of the admittance matrix, $\bY$, and is commonly referred to as the \emph{Kron-reduced admittance matrix}. The Kron-reduced admittance matrix $\bY_r$ corresponds to an equivalent connected network of series impedances~\cite{Florian-2013-Kron}. The previous manipulation relies on the invertibility of $\bY_{00}$, which is guaranteed per the result below. 
\begin{lemma}\label{lem:Y00}
Given a strict subset $\mcN_0\subset\mcN$, the submatrix $\bY_{00}$ defined as per~\eqref{eq:Kronpartition} is invertible if one of the following conditions hold: c1) $r_e>0,~\forall e\in\mcE$; c2) $l_e>0,~\forall e\in\mcE$.
\end{lemma}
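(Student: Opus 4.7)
The plan is to split the complex admittance into its real and imaginary parts and recognize each as a weighted graph Laplacian of the underlying connected graph $\mcG$. Writing the diagonal factor entrywise,
\[
(r_e+\jmath\omega\ell_e)^{-1}=\frac{r_e}{r_e^2+\omega^2\ell_e^2}-\jmath\,\frac{\omega\ell_e}{r_e^2+\omega^2\ell_e^2},
\]
gives $\bY=\bY_R+\jmath\bY_I$ with $\bY_R=\bB\,\diag\!\big(r_e/(r_e^2+\omega^2\ell_e^2)\big)\bB^\top$ and $-\bY_I=\bB\,\diag\!\big(\omega\ell_e/(r_e^2+\omega^2\ell_e^2)\big)\bB^\top$. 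Both are real, symmetric, positive semidefinite, and have the form of weighted Laplacians of $\mcG$ whose edge weights are strictly positive precisely when $r_e>0$ (for $\bY_R$) or $\ell_e>0$ (for $-\bY_I$), respectively.

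Next I would invoke the standard fact that for a connected graph, a weighted Laplacian with strictly positive edge weights has a one-dimensional kernel spanned by $\bone$, and therefore any principal submatrix indexed by a \emph{strict} subset $\mcN_0\subsetneq\mcN$ is strictly positive definite. Concretely, if $\bx^\top\bY_{R,00}\bx=0$, zero-padding $\bx$ to $\tilde{\bx}\in\mathbb{R}^N$ yields $\tilde{\bx}^\top\bY_R\tilde{\bx}=0$, forcing $\tilde{\bx}\propto\bone$; since $\tilde{\bx}$ vanishes on the nonempty complement $\mcN\setminus\mcN_0$, we must have $\bx=\bzero$. The identical argument applies to $-\bY_{I,00}$.

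Finally, I would argue invertibility of $\bY_{00}$ by contradiction: suppose $\bY_{00}\bx=\bzero$ for some nonzero complex $\bx$. Since $\bY_R$ and $\bY_I$ are real symmetric, the quadratic forms $\bx^{\mathsf H}\bY_{R,00}\bx$ and $\bx^{\mathsf H}\bY_{I,00}\bx$ are both real, so
\[
0=\bx^{\mathsf H}\bY_{00}\bx=\bx^{\mathsf H}\bY_{R,00}\bx+\jmath\,\bx^{\mathsf H}\bY_{I,00}\bx
\]
forces each part to vanish separately. Under c1, positive definiteness of $\bY_{R,00}$ already contradicts $\bx\neq\bzero$; under c2, positive definiteness of $-\bY_{I,00}$ does the same.

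The only subtlety — and what I expect to be the main point worth being careful about — is that $\bY$ is complex symmetric but not Hermitian, so the conclusion cannot be drawn from a single definiteness statement on $\bY$ itself; one must exploit that the Hermitian form $\bx^{\mathsf H}\bY_{00}\bx$ cleanly decomposes into a real and an imaginary part, each of which is a Hermitian quadratic form of a real Laplacian submatrix. Once this split is in place, the rest follows from the textbook kernel characterization of a connected-graph Laplacian.
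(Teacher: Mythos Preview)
Your argument is correct and shares the paper's overall strategy: split $\bY=\bY_R+\jmath\bY_I$, recognize $\bY_R$ and $-\bY_I$ as weighted Laplacians of the connected graph $\mcG$, and use that a strict principal submatrix of such a Laplacian is positive definite whenever all its edge weights are strictly positive. The difference is in how the contradiction is closed. The paper separates the \emph{equation} $\bY_{00}\bx=\bzero$ into real and imaginary parts, obtaining a real $2N_0\times 2N_0$ system in $(\bx_r,\bx_i)$; it then uses invertibility of $\Re(\bY_{00})$ to eliminate $\bx_r$ and must further argue that the resulting coefficient $\Im(\bY_{00})[\Re(\bY_{00})]^{-1}\Im(\bY_{00})+\Re(\bY_{00})$ is positive definite to force $\bx_i=\bzero$. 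Your route via the Hermitian form $\bx^{\mathsf H}\bY_{00}\bx$ is shorter: since $\bx^{\mathsf H}\bY_{R,00}\bx$ and $\bx^{\mathsf H}\bY_{I,00}\bx$ are each real, their vanishing is forced separately, and positive definiteness of whichever block is guaranteed by the active hypothesis immediately gives $\bx=\bzero$. This sidesteps the algebraic elimination and the auxiliary definiteness claim, and makes the symmetry between \emph{c1)} and \emph{c2)} explicit rather than deferred to ``follows similarly.'' One small point worth stating when you write it up: under \emph{c2)} the weights $\omega\ell_e/(r_e^2+\omega^2\ell_e^2)$ are strictly positive only if $\omega\neq 0$, an assumption implicit in the steady-state setting (and equally implicit in the paper's proof).
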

\noindent A proof is provided in Appendix~\ref{app:leY00}. The sufficient condition \emph{c2)} coincides with Assumption~\ref{as:1}, thus ensuring applicability of Lemma~\ref{lem:Y00} to the networks considered in this work.
\begin{remark}
Under varying network models, results related to Lemma~\ref{lem:Y00} may be found in the recent works~\cite{Kettner18Y,turizo2020invertibility}. These  establish invertibility for principal submatrices of $\bY$ under a set of conditions including \emph{c1)}. The furnished approaches are complicated by the presence of shunt elements and transformers, see~\cite{turizo2020invertibility}. However, for the $RL$ network considered here, the relatively simpler proof for Lemma~\ref{lem:Y00} suffices.\end{remark}
\subsection{Time-domain Reduction for Homogeneous Networks}\label{sec:homo}
To reduce the time-domain model in~\eqref{eq:DAE} for eliminating the interior nodes in $\mcN_0$, previous results rely on the so called homogeneous network assumption, see~\cite{Caliskan14Automatica}. This assumption dictates that all edges, $e \in \mcE$, have a constant $r_e/\ell_e$ ratio, translating to $\bR=\alpha\bL$ for some $\alpha>0$. Substituting this homogeneity condition in~\eqref{eq:DAE-D} yields
\begin{equation}\label{eq:homoRL}
    \dot{\bef}=-\alpha\bef+\bL^{-1}\bB^\top\bv.
\end{equation}
Pre-multiplying \eqref{eq:homoRL} with $\bB$ and using KCL (as transcribed in~\eqref{seq:model:KCL}) provides the dynamic model in current injections:
\begin{equation}\label{eq:homoi}
    \dot{\bi}=-\alpha\bi+\bB\bL^{-1}\bB^\top\bv=\widetilde{\bL}\bv,
\end{equation}
where $\widetilde{\bL}=\bB\bL^{-1}\bB^\top$. Similar to $\bY$ in~\eqref{eq:iYv}, $\widetilde{\bL}$ in~\eqref{eq:homoi} corresponds to a Laplacian of the graph $\mcG$. One can partition~\eqref{eq:homoi}, use $\bi_0 =0$ (and hence $\dot{\bi}_0=0$), to obtain 
\begin{equation}\label{eq:homo-part}
    \begin{bmatrix}
             \dot{\bi}_1\\\mathbf{0}
        \end{bmatrix}=-\alpha\begin{bmatrix}
             \bi_1\\\mathbf{0}
        \end{bmatrix}+\begin{bmatrix}
                 \widetilde{\bL}_{11} &\widetilde{\bL}_{10}\\
                 \widetilde{\bL}_{10}^\top &\widetilde{\bL}_{00}
        \end{bmatrix}\begin{bmatrix}
             \bv_1\\\bv_0
        \end{bmatrix}.
\end{equation}
The structure of~\eqref{eq:homo-part} conveniently allows for elimination of the voltages $\bv_0$, as was done in transiting from \eqref{eq:Kronpartition} to \eqref{eq:Kron}. Specifically, the equation in the second row yields $\bv_0=-\widetilde{\bL}_{00}^{-1}\widetilde{\bL}_{10}^\top\bv_1$,
which, when substituted in the first row, yields the reduced dynamical model in terms of current injections:
\begin{equation}\label{eq:tdreduced}
    \dot{\bi}_1=-\alpha\bi_1+(\widetilde{\bL}\setminus \widetilde{\bL}_{00})\bv_1.
\end{equation}
The invertibility of $\widetilde{\bL}_{00}$ is ascertained from the following observation: The definition of $\widetilde{\bL}$ in \eqref{eq:homoi} indicates that for a purely inductive network with inductances given by $\bL$, the related matrix $\bY$ can be written as $\jmath\omega\widetilde{\bL}$; see~\eqref{eq:iYv}. Owing to Lemma~\ref{lem:Y00} and Assumption~\ref{as:1}, the invertibility of $\bY_{00}$ or $\jmath\omega\widetilde{\bL}_{00}$ (equivalently $\widetilde{\bL}_{00}$) is guaranteed.

The above delineated steps feature resemblance to Kron reduction in \eqref{eq:Kronpartition}-\eqref{eq:Kron}, and the reduced model~\eqref{eq:tdreduced} admits striking network-theoretic similarities as well. Specifically, matrix $(\widetilde{\bL}\setminus \widetilde{\bL}_{00})$ in~\eqref{eq:tdreduced} corresponds to a Laplacian of a reduced graph, which topologically coincides with the one obtained from~$\bY_r$ in~\eqref{eq:Kron} (see Fig.~\ref{fig:Overview}(middle)). The topology of the reduced network depends on the originating one, and is agnostic to the edge weights, see ~\cite[Prop. 5.7]{Dorfler18graph}. While not explicitly captured in the mathematical presentation, the previous approach applies to purely resistive/inductive networks as well. 

The above approach accomplishes model reduction by transforming the states from $\bef$ to $\bi$, thereby applying the zero-injection condition directly on $\bi$. However, the maneuvers involved only apply to homogeneous networks. The next section addresses a general setting. 
\section{Generalized Time-domain Model Reduction}\label{sec:general}
This section puts forth the proposed approach to eliminate zero-injection nodes $\mcN_0$ from the time-domain model~\eqref{eq:DAE}. Next, it is shown that the prior results of Section~\ref{sec:prior} can be obtained as special instances of our generalized approach. Subsequently, flexibilities in the reduced-model structure are elaborated and circuit interpretations are outlined.
\subsection{Main Result}\label{sec:main}
The $RL$ dynamics \eqref{eq:DAE-D} feature $E$ differential equations in $E$-length state-vector $\bef$. However, constraint~\eqref{eq:DAE-A} restricts the flows $\bef$ to a low-dimensional subspace; specifically $\bef\in\nullspace(\bB_0)$. It is worth noting that
$$\dimension(\nullspace(\bB_0))=E-N_0,$$
where $N_0=|\mcN_0|$. (See Appendix~\ref{app:dimension} for proof.) Therefore, one can obtain a low-dimensional embedding $\whf\in\mathds{R}^{E-N_0}$ for vectors $\bef\in\nullspace(\bB_0)$ via
\begin{equation}\label{eq:lowf}
    \bef=\bP\whf,
\end{equation}
where the matrix $\bP$ should be chosen to yield $\range(\bP)=\nullspace(\bB_0)$. In a graph-theoretic sense, matrix $\bP$ spans the space orthogonal to the cutset space defined by cuts of interior nodes in $\mcN_0$~\cite{Dorfler18graph}. Being a (potentially abstract) representation of network current flows, we refer to $\whf$ as \emph{pseudoflows}. Using the prescribed embedding, the ensuing result formally establishes the sought reduced model corresponding to~\eqref{eq:DAE}.
\begin{theorem}\label{th:TDK}
Consider the differential equation
\begin{equation}\label{eq:reducedTD}
        \whL\dot{\whf}=-\whR\whf+\whB^\top\bv_1,
\end{equation}
where,  $\whL =\bP^\top\bL\bP$, $\whR =\bP^\top\bR\bP$, and $\whB =\bB_1\bP$. The following hold:
\begin{itemize}
    \item Matrix $\whL$ is invertible rendering~\eqref{eq:reducedTD} an ODE.
    \item Solutions of network flows obtained from~\eqref{eq:reducedTD} with initial condition $\whf_0$ alongside \eqref{eq:lowf} coincide with the solutions of \eqref{eq:DAE} with $\bef(t=0)=\bP\whf_0$.
\end{itemize}
\end{theorem}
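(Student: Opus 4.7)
The plan is to establish the two bullets separately, with the invertibility of $\whL$ providing the foundation for the equivalence claim.

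First, for invertibility of $\whL$: by Assumption~\ref{as:1} the diagonal matrix $\bL$ is symmetric positive definite and admits a root $\bL^{1/2}$. The matrix $\bP$ has $E-N_0$ columns whose span $\range(\bP)=\nullspace(\bB_0)$ is also of dimension $E-N_0$, so $\bP$ has full column rank, and hence so does $\bL^{1/2}\bP$. Writing $\whL=(\bL^{1/2}\bP)^\top(\bL^{1/2}\bP)$ then exhibits $\whL$ as symmetric positive definite, in particular invertible, so \eqref{eq:reducedTD} is a genuine ODE with a unique trajectory for every initial condition.

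For the equivalence claim, the central structural fact to exploit is $\range(\bP)=\nullspace(\bB_0)$, which yields the identity $\bB_0\bP=\bZero$ (equivalently $\bP^\top\bB_0^\top=\bZero$) and, by the fundamental theorem of linear algebra, the orthogonal decomposition $\range(\bP)^\perp=\range(\bB_0^\top)$. I would verify both directions. \emph{Forward:} given a solution $\whf(t)$ of \eqref{eq:reducedTD} with $\whf(0)=\whf_0$, set $\bef(t)=\bP\whf(t)$; the algebraic constraint $\bB_0\bef=\bzero$ is immediate, and using $\dot{\bef}=\bP\dot{\whf}$ the reduced ODE expands to $\bP^\top(\bL\dot{\bef}+\bR\bef-\bB_1^\top\bv_1)=\bzero$, so the residual lies in $\range(\bB_0^\top)$ and can be written as $\bB_0^\top\bv_0(t)$ for a unique $\bv_0(t)$, recovering \eqref{eq:DAE-D} with $\bv=[\bv_1^\top~\bv_0^\top]^\top$. \emph{Converse:} any DAE solution $\bef(t)$ satisfies $\bef(t)\in\nullspace(\bB_0)=\range(\bP)$, so $\bef=\bP\whf$ for a unique $\whf$ (by full column rank of $\bP$) with $\whf(0)=\whf_0$; pre-multiplying \eqref{eq:DAE-D} by $\bP^\top$ annihilates the $\bv_0$ contribution via $\bP^\top\bB_0^\top=\bZero$ and collapses to \eqref{eq:reducedTD}.

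The step I expect to be the main obstacle is not a calculation but justifying that the DAE solution associated with a consistent initial condition exists and is unique---otherwise the word ``coincide'' is ambiguous, since the forward direction alone only says the lifted trajectory is \emph{a} DAE solution, not \emph{the} DAE solution. I would close this gap by differentiating $\bB_0\bef=\bzero$ and substituting \eqref{eq:DAE-D} to obtain $\bB_0\bL^{-1}\bB_0^\top\bv_0=\bB_0\bL^{-1}(\bB_1^\top\bv_1-\bR\bef)$; invertibility of $\bB_0\bL^{-1}\bB_0^\top$ follows from Lemma~\ref{lem:Y00} applied to a purely inductive auxiliary network with inductances $\bL$ (the same device used for $\widetilde{\bL}_{00}$ in Section~\ref{sec:homo}), which uniquely pins down $\bv_0$ as a function of $\bef$ and $\bv_1$, reducing the DAE to an ODE with Lipschitz right-hand side and thereby ensuring uniqueness. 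Combined with the two implications above, this forces the lifted and DAE trajectories to coincide.
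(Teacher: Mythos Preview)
Your argument is correct and contains the paper's proof as a proper subset. The paper argues invertibility of $\whL$ exactly as you do (positive definiteness of $\bL$ plus full column rank of $\bP$), and for the equivalence it only carries out what you call the \emph{converse}: substitute $\bef=\bP\whf$ into \eqref{eq:DAE-D}, pre-multiply by $\bP^\top$, and use $\bB_0\bP=\bZero$ to kill the $\bv_0$ term. It does not spell out the forward lift, nor does it address existence/uniqueness of the DAE trajectory. Your additions---recovering $\bv_0$ as the unique $\range(\bB_0^\top)$ component of the residual via $\range(\bP)^\perp=\range(\bB_0^\top)$, and then pinning $\bv_0$ down through the index-one reduction $\bB_0\bL^{-1}\bB_0^\top\,\bv_0=\bB_0\bL^{-1}(\bR\bef-\bB_1^\top\bv_1)$ with invertibility borrowed from the $\widetilde{\bL}_{00}$ argument in Section~\ref{sec:homo}---genuinely close the ``coincide'' gap that the paper leaves implicit. (Note the sign on the right-hand side of your index-one equation is flipped relative to what falls out of \eqref{eq:DAE-D}, but this is cosmetic.) The trade-off is that the paper's proof is shorter and adequate if one is content to read ``coincide'' informally; your version is the one that actually certifies a bijection between trajectories.
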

\begin{proof}
  We will start with establishing the invertibility of $\whL$. Note that positive $\ell_e$'s from Assumption~\ref{as:1} imply $\bL\succ0$. Further, since columns of $\bP$ are linearly independent, we get $\whL\succ 0$. This guarantees the invertibility of $\whL$, and renders~\eqref{eq:reducedTD} an ODE. Towards establishing the equivalence of \eqref{eq:reducedTD} and \eqref{eq:DAE}, we substitute \eqref{eq:lowf} in~\eqref{eq:DAE-D} to obtain
\begin{equation}\label{eq:LP}
    \bL\bP\dot{\whf}=-\bR\bP\whf+\bB^\top\bv.
\end{equation}
Equation~\eqref{eq:LP} constitutes an over-determined system of $E$ differential equations with linear dependence. To eliminate the linear dependence, pre-multiply~\eqref{eq:LP} with $\bP^\top$ to obtain
\begin{align}
    \bP^\top\bL\bP\dot{\whf}&=-\bP^\top\bR\bP\whf+\bP^\top[\bB_0^\top~\bB_1^\top]\begin{bmatrix}
             \bv_0\\ \bv_1
    \end{bmatrix}\notag\\
    &=-\bP^\top\bR\bP\whf+\bP^\top\bB_1^\top\bv_1,\label{eq:PLP}
\end{align}
where, the second line follows from the fact that $\range(\bP)=\nullspace(\bB_0)$, or $\bB_0\bP=\bzero$. Substituting the definitions of $(\whL,\whR,\whB)$ yields~\eqref{eq:reducedTD}.  
\end{proof}
In line with the problem statement in Section~\ref{sec:PS}, the reduced ODE model~\eqref{eq:reducedTD} eliminates the unknown voltages $\bv_0$ and features exclusively the voltages $\bv_1$ as inputs. Finally, the output equation~\eqref{eq:DAEoutput} gets modified using~\eqref{eq:lowf} to
\begin{equation}\label{eq:ODEoutput}
    \bi_1=\bB_1\bP\whf,
\end{equation}
thus yielding the sought relation from input $\bv_1$ to output $\bi_1$.
\begin{remark}[Is $\bP$ unique?]\label{rem:P}
Given matrix $\bB_0^\top$, the matrix $\bP$ featuring in the reduced model of Theorem~\ref{th:TDK} is \underline{not} unique. It can be built by collecting as columns, an arbitrary basis for $\nullspace(\bB_0)$. Technical details on choice of a specific $\bP$ and related interpretations are provided in Section~\ref{sec:P}.  
\end{remark}
\begin{remark}[Relation to Galerkin Projection]
The proposed approach features similarities to the Galerkin projection-based model order reduction (PMOR)~\cite{antoulas2020interpolatory}. In applying Galerkin PMOR to get a reduced model of order $M<N$ for \eqref{eq:DAE}, one seeks a matrix $\bV\in\mathds{R}^{E\times M}$, such that $\bef\approx\bV\whf$, where $\whf\in\mathds{R}^{M}$. Substituting this subspace approximation in~\eqref{eq:DAE-D}  yields $\bL\bV\dot{\whf}=-\bR\bV\whf+\bB^\top\bv+\br_1$, with residual $\br_1$  capturing the approximation error. Next, one pre-multiplies the previous equation by $\bV^\top$ while imposing $\bV^\top\br_1=\bzero$ towards reducing the error; thus obtaining a reduced $M$-order model. The accuracy loss from reduction is quantified by $\|\br_1\|$. While the approach of Theorem~\ref{th:TDK} is similar in spirit to Galerkin PMOR, unlike the later, it attains an equivalent reduced model with no approximation error. Specifically, with the choice of reduced-model order $M=N-E_0$, we make the subspace approximation exact in~\eqref{eq:lowf} with $\bV=\bP$; which on substitution in \eqref{eq:DAE} yields $\|\br_1\|=0$.
\end{remark}
\subsection{Prior Results as Special Cases}\label{sec:harmonizing}
This section reconciles the prior results~\eqref{eq:Kron} and~\eqref{eq:tdreduced} with the proposed generalized reduced model~\eqref{eq:reducedTD}. To this end, we first evaluate the reduced models yielded by Theorem~\ref{th:TDK} for the two special cases of Section~\ref{sec:prior}. Next we will show that these models coincide with~\eqref{eq:Kron} and~\eqref{eq:tdreduced}.
\subsubsection{Steady-state model} Assigning the steady-state form to pseudoflows as $\overline{\widehat{\mathbf{f}}}\mathrm{e}^{\jmath\omega t}$ and substituting in~\eqref{eq:reducedTD} yields (after some elementary algebraic manipulations) $(\whR+\jmath\omega\whL)\overline{\widehat{\mathbf{f}}}=\whB\bar{\bv}_1$.
Subsequently using $\overline{\bef}=\bP\overline{\widehat{\mathbf{f}}}$ and $\overline{\bi}_1=\bB_1\overline{\bef}$, from \eqref{eq:DAEoutput} and \eqref{eq:lowf} applied to steady-state values, one gets
\begin{equation}\label{eq:PsteadyKron}
    \overline{\bi}_1=\bB_1\bP\big(\bP^\top(\bR+\jmath\omega\bL)\bP\big)^{-1}\bP^\top\bB_1^\top\overline{\bv}_1,
\end{equation}
where the definitions of $(\whL,\whR,\whB)$ follow from Theorem~\ref{th:TDK}.
\subsubsection{Homogeneous networks} Substituting the homogeneous-network condition $\bR=\alpha\bL$, or equivalently $\whR=\alpha\whL$, in~\eqref{eq:reducedTD} and premultipying by $\whL^{-1}$ provides
$$\dot{\whf}=-\alpha\whf+\whL^{-1}\whB^\top\bv_1.$$
Further, premultiplying by $\bB_1\bP$ (invoking~\eqref{eq:DAEoutput} and \eqref{eq:lowf}), and substituting the definitions of $(\whL,\whB)$ yields
\begin{equation}\label{eq:PhomoKron}
    \dot{\bi}_1=-\alpha\bi_1+\bB_1\bP(\bP^\top\bL\bP)^{-1}\bP^\top\bB_1^\top\bv_1.
\end{equation}

Next, we show the equivalence of the reduced models \eqref{eq:PsteadyKron} and \eqref{eq:PhomoKron} to prior results~\eqref{eq:Kron} and \eqref{eq:tdreduced} in two steps. First, we note that \eqref{eq:PsteadyKron} and \eqref{eq:PhomoKron} feature matrix $\bP$, which is not unique.  Hence, we provide an algebraic claim in Lemma~\ref{le:PWP} to  eliminate the apparent ambiguity from the non-uniqueness of $\bP$. Second, we show in Proposition~\ref{prop:unify} that via appropriate instantiation of a weighting matrix, the models in~\eqref{eq:PsteadyKron} and~\eqref{eq:PhomoKron} coincide with the prior results~\eqref{eq:Kron} and~\eqref{eq:tdreduced}. The proof of Proposition~\ref{prop:unify} builds upon the technical Lemma~\ref{le:PWP}.
\begin{lemma}\label{le:PWP}
For a connected graph $(\mcN,\mcE)$, consider complex-valued edge weights $w_e\neq 0, \forall e\in\mcE$, and a row-block partition of the companion incidence matrix as $\bB^\top=[\bB^\top_1~\bB^\top_0]$; define $\bW = \diag(\{w_e\}_{e\in\mcE})$. Given a full-column-rank matrix $\bP$ with $\range(\bP)=\nullspace(\bB_0)$, it holds that
           \begin{align}\label{eq:LePWP}
     \begin{split}
         \bP(\bP^\top\bW\bP)^{-1}\bP^\top&=\\\bW^{-1}-\bW^{-1}&\bB_0^\top(\bB_0\bW^{-1}\bB_0^\top)^{-1}\bB_0\bW^{-1}.
                          \end{split}
                          \end{align}
\end{lemma}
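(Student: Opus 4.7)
The plan is to prove the identity by showing that both sides, when right-multiplied by $\bW$, are the same idempotent matrix. Let $\bM$ denote the left-hand side and $\bN$ the right-hand side. A short calculation verifies $(\bM\bW)^2 = \bM\bW$ and $(\bN\bW)^2 = \bN\bW$, so both are (oblique) projections. Since a projection is fully determined by its range and null space, the task reduces to matching these two subspaces.

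First, I would identify the ranges. Because $\bP$ has full column rank with $\range(\bP) = \nullspace(\bB_0)$, it follows directly that $\range(\bM\bW) = \nullspace(\bB_0)$. For $\bN$, left-multiplying by $\bB_0$ gives $\bB_0(\bN\bW) = \bzero$, so $\range(\bN\bW) \subseteq \nullspace(\bB_0)$; equality follows because $\bN\bW\bx = \bx$ for every $\bx \in \nullspace(\bB_0)$. Next, for the null spaces, I would show that both coincide with $\{\bx : \bW\bx \in \range(\bB_0^\top)\}$. For $\bM\bW$, the injectivity of $\bP$ and of $(\bP^\top\bW\bP)^{-1}$ reduces $\bM\bW\bx = \bzero$ to $\bP^\top\bW\bx = \bzero$, i.e.\ $\bW\bx \in \nullspace(\bP^\top) = \range(\bP)^\perp = \nullspace(\bB_0)^\perp = \range(\bB_0^\top)$. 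For $\bN\bW$, the equation $\bN\bW\bx = \bzero$ forces $\bx \in \range(\bW^{-1}\bB_0^\top)$, equivalently $\bW\bx \in \range(\bB_0^\top)$; the reverse inclusion follows by a one-line substitution into the formula for $\bN\bW$. Hence $\bM\bW = \bN\bW$, and right-multiplying by $\bW^{-1}$ gives the identity.

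The main subtle point is handling the complex-valued $\bW$ alongside the real matrices $\bP$ and $\bB_0$. The orthogonal-complement identities used above rely solely on the standard bilinear form $\bx^\top\by$ and on dimension counts ($\dimension(\nullspace(\bB_0)) = E - N_0$, and $\bB_0$ having full row rank $N_0$ because it omits at least one row of the incidence matrix of a connected graph), neither of which requires Hermitian conjugation. Consequently, the projection argument extends verbatim to the complex setting, and the identity holds whenever the displayed inverses exist.
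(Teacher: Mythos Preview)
Your argument is correct: $\bM\bW$ and $\bN\bW$ are both idempotent, share the range $\nullspace(\bB_0)$, and share the null space $\bW^{-1}\range(\bB_0^\top)$, so they coincide; multiplying by $\bW^{-1}$ gives \eqref{eq:LePWP}. Your treatment of the complex weights is also sound, since $\bP$ and $\bB_0$ are real and the orthogonal-complement identities $\nullspace(\bP^\top)=\range(\bB_0^\top)$ hold over $\mathbb{R}$ and then extend by complexification.

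This is a genuinely different route from the paper's. The paper conjugates by a square root of $\bW$, setting $\wtP=\bW^{1/2}\bP$ and $\wtB_0=\bB_0\bW^{-1/2}$, which turns the left side into $\bW^{-1/2}\wtP(\wtP^\top\wtP)^{-1}\wtP^\top\bW^{-1/2}$; the inner factor is the \emph{orthogonal} projector onto $\range(\wtP)=\nullspace(\wtB_0)$, which is then replaced by $\bI-\wtB_0^\top(\wtB_0\wtB_0^\top)^{-1}\wtB_0$ and undone by $\bW^{-1/2}$ on both sides. The paper's proof is shorter once the $\bW^{1/2}$ trick is spotted, but it tacitly invokes a square root of a diagonal complex matrix and orthogonal-projection identities under the bilinear (not Hermitian) form. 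Your oblique-projection argument sidesteps $\bW^{1/2}$ entirely and makes explicit why the identity is independent of the particular basis $\bP$: any two idempotents with the same range and null space agree. Either way, both proofs ultimately rest on the same structural fact that $\range(\bP)$ and $\range(\bW^{-1}\bB_0^\top)$ are complementary subspaces of $\mathbb{C}^E$.
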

\begin{proof}
  Define $\wtB_0=\bB_0\bW^{-\frac{1}{2}}$ and $\wtP=\bW^{\frac{1}{2}}\bP$. Given ${\range(\bP)=\nullspace(\bB_0)}$, it follows that $\range(\wtP)=\nullspace(\wtB_0)$ and the LHS of~\eqref{eq:LePWP} can be written as
  \begin{equation}\label{eq:ProjP}
      \bP(\bP^\top\bW\bP)^{-1}\bP^\top=\bW^{-\frac{1}{2}}\underbrace{\wtP(\wtP^\top\wtP)^{-1}\wtP^\top}_{= \mathds{P}_{\wtP}}\bW^{-\frac{1}{2}},
  \end{equation}
  where $\mathds{P}_{\wtP}$ denotes the projection onto $\range(\wtP)$ (in this case, equivalent to the projection onto $\nullspace(\wtB_0)$). Denote the projection onto $\range(\wtB_0^\top)$ as $\mathds{P}_{\wtB_0^\top} =\wtB_0^\top(\wtB_0\wtB_0^\top)^{-1}\wtB_0$, where the invertibility of $\wtB_0\wtB_0^\top=\bB_0\bW^{-1}\bB_0^\top$ is guaranteed since $\bB_0$ is a strict row partition of the incidence matrix of a connected graph, and hence features linearly independent rows. Using $\nullspace(\wtB_0)=\range(\wtB_0^\top)$, it can be shown that
  \begin{equation}\label{eq:sumI}
     \mathds{P}_{\wtP}=\mathds{I}- \mathds{P}_{\wtB_0^\top},
  \end{equation}
  where $\mathds{I}$ is the identity matrix of size $E$. Pre- and post-multiplying \eqref{eq:sumI} with $\bW^{-\frac{1}{2}}$ and substituting the definitions of $(\mathds{P}_{\wtP}, \mathds{P}_{\wtB_0^\top})$ provides
       \begin{align}
       \begin{split}
         \bW^{-\frac{1}{2}}\wtP(\wtP^\top\wtP)^{-1}\wtP^\top\bW^{-\frac{1}{2}}&=\\\bW^{-1}-\bW^{-\frac{1}{2}}&\wtB_0^\top(\wtB_0\wtB_0^\top)^{-1}\wtB_0\bW^{-\frac{1}{2}}.
         \end{split}
     \end{align}
     Substituting $\wtB_0=\bB_0\bW^{-\frac{1}{2}}$ and $\wtP=\bW^{\frac{1}{2}}\bP$ above yields~\eqref{eq:LePWP} and completes the proof.
\end{proof}

Lemma~\ref{le:PWP} establishes that despite the non-uniqueness of $\bP$, the structural form in the LHS of~\eqref{eq:LePWP} equals the unique matrix in the RHS. The next result, proved in Appendix~\ref{app:Prop}, uses Lemma~\ref{le:PWP} to reconcile \eqref{eq:PsteadyKron} to \eqref{eq:Kron}, and \eqref{eq:PhomoKron} to \eqref{eq:tdreduced}.
\begin{proposition}\label{prop:unify}
For the complex edge weights $\{w_e\}$ introduced in Lemma~\ref{le:PWP}, define Laplacian matrix $\widetilde{\bW}=\bB\bW^{-1}\bB^\top$, and let $\widetilde{\bW}_{00}$ be the matrix block partition corresponding to the interior nodes $\mcN_0$. Then, it holds that
\begin{equation}\label{eq:cor}
    \bB_1\bP(\bP^\top\bW\bP)^{-1}\bP^\top\bB_1^\top=\widetilde{\bW}\setminus\widetilde{\bW}_{00}.
\end{equation}
Choosing $\bW=\bR+\jmath\omega\bL$ in~\eqref{eq:cor} establishes the equivalence of~\eqref{eq:PsteadyKron} and~\eqref{eq:Kron}; and choosing $\bW=\bL$ establishes the equivalence of~\eqref{eq:PhomoKron} and~\eqref{eq:tdreduced}.
\end{proposition}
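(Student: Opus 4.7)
The plan is to prove the key identity \eqref{eq:cor} first, and then verify that the two specific choices of weighting matrix $\bW$ reduce \eqref{eq:PsteadyKron} and \eqref{eq:PhomoKron} to the claimed prior results. The heavy lifting is already done by Lemma~\ref{le:PWP}, so the argument is mostly a matter of identifying the blocks of the weighted Laplacian.

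To establish \eqref{eq:cor}, I would start by pre-multiplying the identity of Lemma~\ref{le:PWP} by $\bB_1$ and post-multiplying by $\bB_1^\top$, which yields
\begin{equation*}
    \bB_1\bP(\bP^\top\bW\bP)^{-1}\bP^\top\bB_1^\top = \bB_1\bW^{-1}\bB_1^\top - \bB_1\bW^{-1}\bB_0^\top(\bB_0\bW^{-1}\bB_0^\top)^{-1}\bB_0\bW^{-1}\bB_1^\top.
\end{equation*}
The next step is to partition the Laplacian $\widetilde{\bW}=\bB\bW^{-1}\bB^\top$ consistently with the partition $\bB^\top=[\bB_1^\top~\bB_0^\top]$; this gives $\widetilde{\bW}_{11}=\bB_1\bW^{-1}\bB_1^\top$, $\widetilde{\bW}_{10}=\bB_1\bW^{-1}\bB_0^\top$, and $\widetilde{\bW}_{00}=\bB_0\bW^{-1}\bB_0^\top$. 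Substituting these identifications into the RHS recovers exactly the Schur-complement form $\widetilde{\bW}_{11}-\widetilde{\bW}_{10}\widetilde{\bW}_{00}^{-1}\widetilde{\bW}_{10}^\top = \widetilde{\bW}\setminus\widetilde{\bW}_{00}$, which proves \eqref{eq:cor}. The invertibility of $\widetilde{\bW}_{00}$ piggybacks on the invertibility of $\bB_0\bW^{-1}\bB_0^\top$ already used in Lemma~\ref{le:PWP}.

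For the two specializations, I would simply plug in. Setting $\bW=\bR+\jmath\omega\bL$ makes $\widetilde{\bW}=\bB(\bR+\jmath\omega\bL)^{-1}\bB^\top = \bY$; the LHS of \eqref{eq:cor} coincides with the matrix sandwiched in \eqref{eq:PsteadyKron}, while the RHS is $\bY\setminus\bY_{00}=\bY_r$, matching \eqref{eq:Kron}. Setting $\bW=\bL$ gives $\widetilde{\bW}=\bB\bL^{-1}\bB^\top=\widetilde{\bL}$; the LHS equals the input-gain matrix in \eqref{eq:PhomoKron}, while the RHS is $\widetilde{\bL}\setminus\widetilde{\bL}_{00}$, matching \eqref{eq:tdreduced}. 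There is no real obstacle here; the only subtlety worth flagging explicitly is that, whereas in Lemma~\ref{le:PWP} the weights $w_e$ are allowed to be complex (so both specializations fall under its hypothesis), the invertibility of the relevant Schur complements must be separately justified—this is already handled by Assumption~\ref{as:1} and Lemma~\ref{lem:Y00} for the $\bW=\bR+\jmath\omega\bL$ case, and by the same Lemma applied to a purely inductive network for the $\bW=\bL$ case.
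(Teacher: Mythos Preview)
Your proof is correct and takes essentially the same approach as the paper: you sandwich the identity of Lemma~\ref{le:PWP} between $\bB_1$ and $\bB_1^\top$ and then identify the resulting blocks with those of $\widetilde{\bW}$, whereas the paper starts from the block expansion of $\widetilde{\bW}$, writes out the Schur complement, and then invokes Lemma~\ref{le:PWP}; the content is identical. One small terminological slip: in your final paragraph you speak of justifying invertibility of the ``Schur complements,'' but what is actually needed (and what Lemma~\ref{lem:Y00} delivers) is invertibility of the block $\widetilde{\bW}_{00}=\bB_0\bW^{-1}\bB_0^\top$ so that the Schur complement is well-defined.
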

\subsection{Choices of $\bP$ and Related Interpretations}\label{sec:P}
\begin{enumerate}
    \item Given the flows on all-but-one edges incident on an interior node, one can trivially recover the extant flow via KCL. Thus, a sub-vector $\whf$ built by omitting from $\bef$, one incident-edge flow per interior node can serve as a reduced representation. Hence, one can express $\bef=\bP\whf$ by stacking the rows of $\bP$ as suitable canonical vectors corresponding to the retained flows; and as $\{0,\pm 1\}$ vectors to calculate the omitted flows.
    \item Compute the basis vectors spanning $\nullspace{(\bB_0)}$. Stack these as columns to obtain $\bP$. 
    \item One can choose $\bP$ to yield diagonal $(\whL,~\whR)$ via the following steps: i) Compute a matrix $\bP'$ to span $\nullspace{(\bB_0^\top)}$ as described in choice 2); ii) Set $\whL'=\bP'^\top\bL\bP'$ and $\whR'=\bP'^\top\bR\bP'$; iii) Obtain the generalized eigen-value decomposition $\whL'\bV=\whR'\bV\bD$; and iv) set $\bP=\bP'\bV$. It can be verified that thus-obtained $\bP$ yields diagonal $(\whL,~\whR)$ with non-negative entries. The latter stems from using $\bL,\bR\succeq0$ in the definitions of $\whL,\whR$.  The reduced model~\eqref{eq:reducedTD} can then be interpreted as $(E-N_0)$ independent $RL$-circuits actuated by a linear combination of voltages in $\bv_1$; see Fig.~\ref{fig:Overview}(bottom). The individual equations read as $$\widehat{L}_{kk}\dot{\hat{f}}_k +\widehat{R}_{kk}\hat{f}_k=\sum_{n=1}^{N-N_0} \beta_{kn}v_n,$$
    where $\beta_{kn}$'s are entries of the product $\bP^\top\bB_1^\top$.
\end{enumerate}
\section{Numerical Tests}\label{tests}
This section empirically illustrates the effectiveness of the proposed generalization using $\wye$-$\Delta$ transformation as an example. For comparison, we adopt a baseline approach from ~\cite{Floriduz19Kron} that heuristically extends Kron reduction to the time domain. Denoted as~$(\mathcal{B})$, the approach involves the following steps: $(\mathrm{S}1)$~Given a $\wye$-connected $RL$ network with one interior node (see Fig.~\ref{fig:Overview}), choose a frequency $\omega_0$ and build an admittance matrix per~\eqref{eq:iYv} that corresponds to a $\wye$-connected \emph{impedance} network. $(\mathrm{S}2)$~Obtain a reduced $\Delta$-connected impedance network using~\eqref{eq:Kron}. $(\mathrm{S}3)$~Separate out real and imaginary parts of impedances in the reduced $\Delta$ connection and factor out~$\omega_0$ to recover a $\Delta$-connected $RL$ network. Clearly, $(\mathcal{B})$ lacks a theoretical claim of equivalence in the time domain; it is further accompanied by two implementation ambiguities: $(\mathrm{A}1)$~To obtain a time-domain solution of the $\wye$-connected $RL$ network, one would be presented with the initial conditions $\bef^{\wye}_0\in\mathds{R}^{3\times 1}$. On obtaining the $\Delta$-connected $RL$ network in $(\mathrm{S}3)$, how does one obtain a corresponding $\bef^{\Delta}_0$? $(\mathrm{A}2)$~How to choose $\omega_0$ in $(\mathrm{S}1)$ when the network may be actuated by arbitrary voltages? 

\subsection{Setup}
Two sets of numerical tests will be presented next to compare our approach, $(\mathcal{B})$, and a ground-truth DAE model implemented in the Matlab-Simulink environment. Both are conducted for a $\wye$-connected $RL$ network with $RL$ parameters randomly drawn from $[0.5,1]$, and given by $\bR=\diag([0.98~0.99~0.58])~\Omega$ and $\bL=\diag([0.55~0.64~0.77])~\mathrm{H}$; and initial branch currents $\bef^{\wye}_0=-[5,5,-10]^\top~\mathrm{A}$. To obtain an initial condition for the current flows $\bef^\Delta_0$ given $\bef^{\wye}_0$ in $(\mathcal{B})$, we have to ensure the boundary-node current injections agree in both $\Delta$ and $\wye$ representations. One such solution, $\tilde{\bef}^\Delta_0$ can be obtained using Matlab command~\texttt{lsqminnorm}$(\bB_r,\bef^{\wye}_0)$, where $\bB_r$ is the incidence matrix of the $\Delta$ connection. This operation returns a minimum-norm solution to $\bB_r \bef^\Delta_0 = \bef^{\wye}_0$. With cyclic edge direction assignment in the $\Delta$ connection, we have $\bB_r\bone=\bzero$; hence for any scalar $\gamma$, vector $\tilde{\bef}^\Delta_0+\gamma\bone$ conforms with the required initial-current injections. Given the ambiguity, $5$ random instances of $\gamma$ are drawn uniformly from $[-5,5]~\mathrm{A}$ in our execution of~$(\mathcal{B})$.
\begin{figure}[t!]
\centering
\includegraphics[scale=0.45]{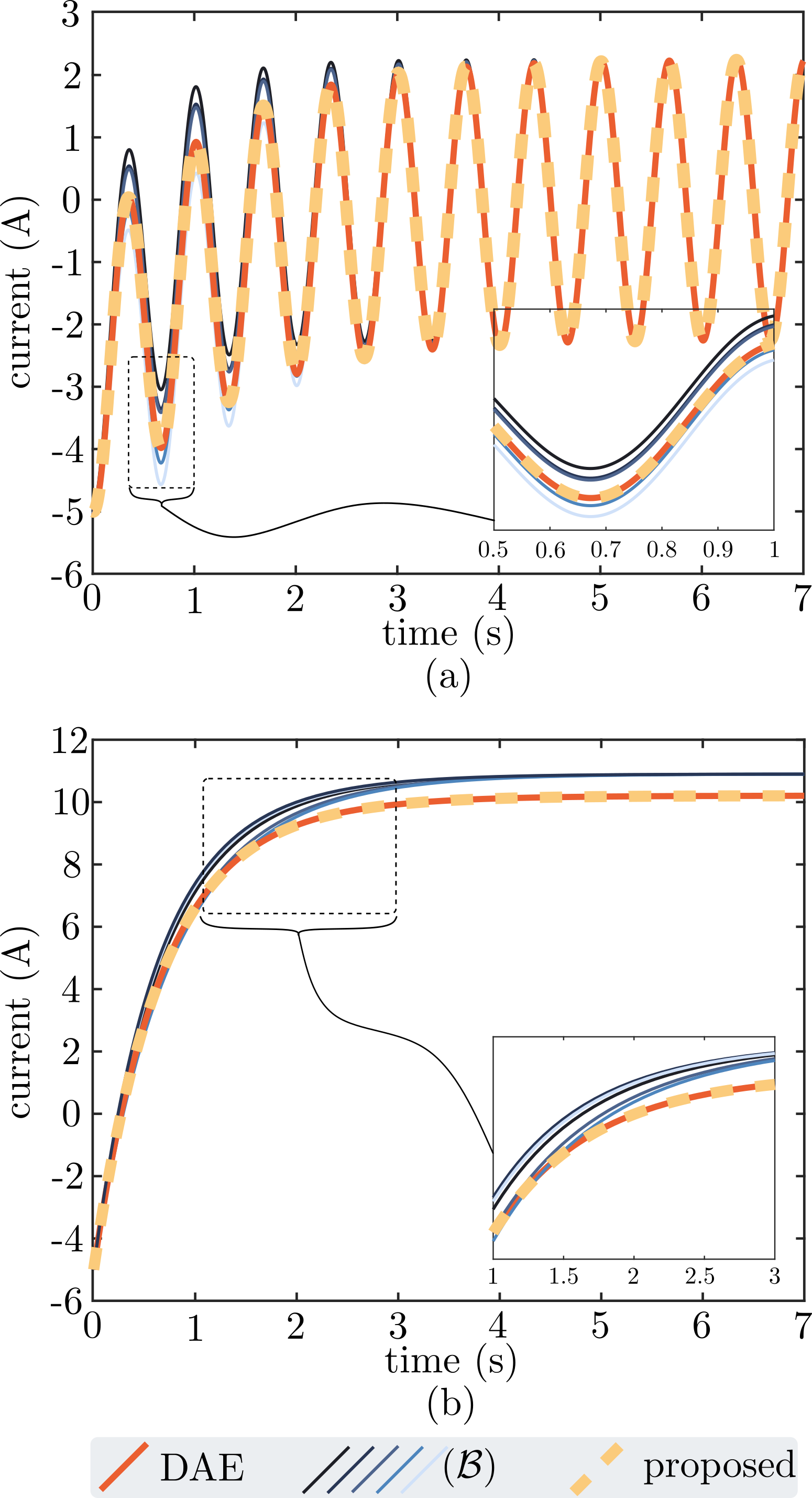}
\vspace{-1em}
\caption{Current injections $i_1(t)$ obtained with: a) sinusoidal-voltage excitation; and b) step-voltage excitation.}
\label{fig:sinusoid}
\vspace{-1em}
\end{figure}

\subsection{Results \& Inferences}
In the first set of tests, we apply $1.5~\mathrm{Hz}$ sinusoidal voltages with amplitudes $120~\mathrm{V}$ and phases $(0,30^\circ,-30^\circ)$ to the three boundary nodes. To implement our approach, $\bP$ was obtained using Matlab command \texttt{null}$(\bA_0)$, the initial condition for simulating the proposed reduced model~\eqref{eq:reducedTD} was evaluated from~\eqref{eq:lowf}. To implement~$(\mathcal{B})$, $\omega_0$ was picked to be $2\pi\times1.5~\mathrm{rad}\cdot\mathrm{s}^{-1}$. The current injections at node $1$, $i_1(t)$, obtained via the three approaches are illustrated in Fig.~\ref{fig:sinusoid}(a). The following observations are in order: i)~the initial values for all approaches coincide by design; ii)~results from the reduced model obtained with the proposed approach coincide point wise with the ground-truth DAE model; and iii)~results from $(\mathcal{B})$ with randomized initializations vary during transient conditions but align with the ground-truth DAE model in steady state~(aligning with the claims in~\cite{Floriduz19Kron}).

In the next set of tests, the excitation voltages were changed to be step functions assuming steady-state values $[120, 100, 110]^\top~\mathrm{V}$. The current injections at node $1$, $i_1(t)$, obtained via the three approaches are illustrated in Fig.~\ref{fig:sinusoid}(b). One observes that: i)~results from the reduced model obtained with the proposed approach coincide point wise with the ground-truth DAE model; and ii)~results of $(\mathcal{B})$ with randomized initializations coincide with each other in steady state, \emph{but} all are different from the ground-truth. 

These sets of numerical tests demonstrate that while the proposed model reduction holds for arbitrary voltage-actuated $RL$ networks in the time domain, an attempt to heuristically extend the classical Kron reduction may not yield accurate results (even in steady state).
\section{Concluding Remarks}
This work put forth a time-domain generalization of Kron reduction for $RL$ networks. Prominent prior results for steady-state conditions and homogeneous networks were shown to be special instances of the proposed model. Numerical tests on the well-known $\wye-\Delta$ transformation setup validated the approach and highlighted the limitations of existing heuristics. Given that the projection matrix $\bP$ relates to the cut-set space of the underlying graph, it is tempting to further investigate graph-theoretic interpretations of the proposed low-dimensional embedding. 
\appendix
\subsection{Proof of Lemma~\ref{lem:Y00}}\label{app:leY00}
Having $r_e,~\ell_e\geq0$ entails $\Re(\bY)$ and $\Im(\bY)$ are positive semidefinite; implying $\Re(\bY_{00}),~\Im(\bY_{00})\succeq 0$. Next, we prove the invertibility of $\bY_{00}$ assuming condition \emph{c1)} is satisfied. The proof for \emph{c2)} follows similarly. Given $r_e>0~\forall e$, the matrix $\Re(\bY)$ is a Laplacian matrix. Furthermore, matrix $\Re(\bY_{00})$ is a strict principal submatrix of $\Re(\bY)$ as $\mcN_0\subset\mcN$; hence $\Re(\bY_{00})\succ 0$. Proving by contradiction, let us assume that the matrix $\bY_{00}$ is singular, implying $\exists~\bx=\bx_r+\jmath \bx_i\neq\bzero$ such that $\bY_{00}\bx=\bzero$. Separating the real and imaginary parts of $\bY_{00}\bx=\bzero$ reads
\begin{subequations}
	\begin{align}
	\Re(\bY_{00})\bx_r-\Im(\bY_{00})\bx_i=\bzero,\label{seq:con1}\\
	\Im(\bY_{00})\bx_r+\Re(\bY_{00})\bx_i=\bzero\label{seq:con2}.
	\end{align}
\end{subequations}
Since $\Re(\bY_{00})\succ 0$,~\eqref{seq:con1} yields \begin{equation}
    \bx_r=[\Re(\bY_{00})]^{-1}\Im(\bY_{00})\bx_i\label{eq:xr}.
\end{equation}
Substituting $\bx_r$ from \eqref{eq:xr} in \eqref{seq:con2} yields \begin{equation}
    \left(\Im(\bY_{00})[\Re(\bY_{00})]^{-1}\Im(\bY_{00})+\Re(\bY_{00})\right)\bx_i=\bzero.\label{eq:xi}
\end{equation}
Using $\Im(\bY_{00})\succeq 0$, $\Re(\bY_{00})\succ 0$, one finds $\left(\Im(\bY_{00})[\Re(\bY_{00})]^{-1}\Im(\bY_{00})+\Re(\bY_{00})\right)$ is invertible, implying $\bx_i=\bzero$ from~\eqref{eq:xi}. Further,~\eqref{eq:xr} yields $\bx_r=\bzero$, or $\bx=\bzero$, leading to a contradiction; thus, $\bY_{00}$ is invertible.
\subsection{Proof for $\dimension(\nullspace(\bB_0))=E-N_0$}\label{app:dimension}
Since $\bB$ is the incidence matrix of a \emph{connected} graph, it features $N-1$ linearly independent rows with $\bone^\top\bB=\bzero$. Thus, any selection of $N_0<N$ rows are linearly independent. Hence, the rows of matrix $\bB_0$ are linearly independent implying the dimension of $\nullspace(\bB_0)$ is $E-N_0$.
\subsection{Proof of Proposition~\ref{prop:unify}}\label{app:Prop}
From the definition $\widetilde{\bW} =\bB\bW^{-1}\bB^\top$, and the partition $\bB^\top=[\bB^\top_1~\bB^\top_0]$, matrix $\widetilde{\bW}$ can be expressed as
  \begin{align*}\label{eq:propProofW}
      \widetilde{\bW}=\begin{bmatrix}\widetilde{\bW}_{11}&\widetilde{\bW}_{10}\\\widetilde{\bW}_{10}^\top&\widetilde{\bW}_{00}\end{bmatrix}&=      \begin{bmatrix}
               \bB_1\\ \bB_0
      \end{bmatrix}\bW^{-1}[
               \bB_1^\top~~\bB_0^\top]\nonumber \\
               &=\begin{bmatrix}
               \bB_1\bW^{-1}\bB_1^\top & \bB_1\bW^{-1}\bB_0^\top\\
               \bB_0\bW^{-1}\bB_1^\top & \bB_0\bW^{-1}\bB_0^\top
      \end{bmatrix}.
  \end{align*}
  The Schur complement $\widetilde{\bW}\setminus\widetilde{\bW}_{00}$ is given by
  \begin{align*}
      &\widetilde{\bW}\setminus\widetilde{\bW}_{00}=\widetilde{\bW}_{11}-\widetilde{\bW}_{10}\widetilde{\bW}_{00}^{-1}\widetilde{\bW}_{10}^\top\\
      &=\bB_1\bW^{-1}\bB_1^\top-\bB_1\bW^{-1}\bB_0^\top(\bB_0\bW^{-1}\bB_0^\top)^{-1}\bB_0\bW^{-1}\bB_1^\top\\
      &=\bB_1\big(\bP(\bP^\top\bW\bP)^{-1}\bP^\top\big)\bB_1^\top,
  \end{align*}
  where, the second equality follows from substituting the matrix blocks, and the last equality follows from Lemma~\ref{le:PWP}. 

\section*{Acknowledgement}
Assistance from D.~Venkatramanan on simulation validation and graphics is greatly appreciated. 

\bibliographystyle{IEEEtran}
\bibliography{myabrv,power}

\end{document}